\newcommand{\code}[1]{\lstinline[mathescape]|#1|}
\newtheorem{theorem}{Theorem}
\newcommand{\atom}    [1]{\AxiomC{#1}}
\newcommand{\unary}   [1]{\UnaryInfC{#1} }
\newcommand{\binary}  [1]{\BinaryInfC{#1} }
\newcommand{\FWD}        {\text{*}}
\newcommand{\To}         {\Rightarrow}
\renewcommand{\vec}   [1]{\overline{#1}}
\newcommand{\tfree}   {\textbf{free}}
\newcommand{\vx}{\ \vec{\text{x}}\ }
\newcommand{\ccase}    [2]{\code{case} $#1$ \code{of} $\ #2$}
\newcommand{\clet}     [2]{\code{let} $#1$ \code{in} $#2$}
\newcommand{\cfree}    [2]{\code{let} $#1$ \code{free in} $#2$}
\def\rice{RICE}
\def\pakcs{Pakcs}
\def\kics{KiCS2}
\def\mcc{MCC}
\def\brassel{Bra{\ss}el}
\begin{document}
\title{An Execution Model for RICE}
\author{Steven Libby
\institute{University of Portland, 
           Portland OR 97203, USA\\ 
           \email{libbys@up.edu}}
           }

\maketitle

\begin{abstract}
In this paper, we build on the previous work of the \rice~\cite{rice}
compiler by giving its execution model.
We show the restrictions to the FlatCurry language
that were made to produce executable code,
and  present the execution model using operational semantics
similar to Launchbury~\cite{lazySemantics}.
Finally, we show that the execution model conforms with 
the standard operational semantics for Curry~\cite{currySemantics}.
\end{abstract}

\section{Introduction}
Recently there has been a renewed interest in the efficient execution
of functional logic programs~\cite{sprite,verse,currygo,rice}.
This has proven to be a rich area of new ideas.
We look in particular at the \rice{} Curry compiler, which has
shown to produce efficient code~\cite{rice}.
This paper provides an execution model for \rice{}.

Previous work on this compiler showed the execution
of programs using a translation to C code.
While this gets the point across,
it is difficult to reason about the correctness of the implementation.
Instead we take an approach similar to \brassel~\cite{Kics2Theory}.
We begin by showing the execution model for the \rice{} compiler,
and show how it is consistent with 
the standard operational semantics for Curry~\cite{currySemantics}.
The primary contribution of this paper is the execution model.

Our execution model differs form previous works in a few respects.
First we differ from Albert et al. \cite{currySemantics} by
encoding the search strategy into the semantics itself.
While Albert et al \cite{detCurrySemantics} parameterize their semantics
on the search strategy, the implementation is left abstract.
\brassel~\cite{Kics2Theory} gives a fully realized execution model
for the Kics2 compiler, which implements non-determinism
with pull tabbing \cite{Kics2Theory}.
Our work is specifically about the RICE compiler, which implements
non-determinism using backtracking \cite{rice}.
This is done primarily because Kics2's performance degrades
on functions that are not right-linear \cite{rice}.
Our aim is to give a clear understanding of the execution of programs
compiled with RICE.

The rest of the paper is organized as follows.
First, we discuss previous work in the execution of Curry programs
and why we chose a different approach.
Second, we examine the syntax of an intermediate representation of Curry,
and restrict it to a form that provides more efficient execution.
Third, we provide the semantics for our restricted code.
Fourth, we show a correspondence with the original semantics.
Finally, we discuss future work and conclude.

\section{Background}

Curry is a functional logic language;
it has a syntax similar to Haskell,
but extends the language in several ways.
The two important differences for this work
are the addition of non-determinism and free variables.

We can construct a non-deterministic expression 
using the choice operator (\texttt{?}).
For example, in the following code, 
\texttt{pickOne} non-deterministically picks an element in a list, 
while \texttt{member} constrains the choice to determine 
membership in that list.
If the element \texttt x is not in \texttt l, then the member function
simply fails and does not returen a value, so there is no need for an
\texttt{otherwise} case like in Haskell.
\begin{curry}
pickOne (x:xs) = x ? pickOne xs
member x l 
  | x == pickOne l = True
\end{curry}

We can also create free (or logic) variables using the syntax
\texttt{while x free}.
For example, we can create a similar \texttt{member} function 
by constraining a list with free variables to match
our input list.
\begin{curry}
member x l
 | l == first++[x]++last = True
  where first, last free
\end{curry}

Non-determinism can be implemented by picking 
a search strategy~\cite{kics2,mcc},
and free variables can be constrained through
narrowing~\cite{needed,TheoryToCurry}.

Curry originally grew out of the field of term rewriting and narrowing.
Antoy~\cite{DefinitionalTrees} showed that term rewrite systems belonging to 
the class of Inductively Sequential systems could be given an
optimal rewriting strategy,
which was later extended by Antoy et al.~\cite{needed}
to an optimal narrowing strategy.
Later Antoy~\cite{lois} showed how this class of rewriting systems
could be extended to include overlapping rules,
and these Limited Overlapping Inductively Sequential (LOIS) systems
could be computed with a combination of narrowing and a search strategy.
This narrowing plus search proved to be a solid
basis for the Curry language and implementations~\cite{pakcs}.

While narrowing proves to be a strong theoretical foundation for Curry,
there are implementation issues that need to be addressed.
In particular, Curry is a lazy language that contains higher order functions.
While there are theories of higher order rewriting~\cite{HOR},
Curry implementations often tend to use defunctionalization 
to deal with higher order rewriting~\cite{kics2,TheoryToCurry},
thus turning it back into a first order rewriting system.
While this addresses the theoretical issues of higher order functions
nicely, it is not efficient~\cite{verse,evalApply}.
It should be noted that the implementations of \kics\ and \mcc\ do
not rely on defunctionalization~\cite{kics2,mcc}.

Lazy evaluation can also pose an issue with using term rewriting
as a basis for Curry.
Terms have a tree-like structure, and do not support sharing.
If a variable is duplicated in a rewrite rule, the entire
subterm is duplicated.
On the other hand, lazy programs do not ever duplicate expressions.
In lazy functional languages, this is important for efficiency reasons.
However in functional logic languages, this changes the semantics.
Therefore sharing must be preserved at all costs~\cite{sharingProblems}.
For this reason, Curry is often presented 
as a graph rewriting system~\cite{graphRewriting}.

Hanus et al.~\cite{currySemantics} described
an intermediate language called FlatCurry
in order to give a semantics for Curry.
They then described a natural semantics for FlatCurry in the style
of Launchbury~\cite{lazySemantics}.
However they did not specify how choice should be handled,
instead leaving it up to the implementation.
This initial Curry semantics was extended 
to be completely deterministic~\cite{detCurrySemantics}
by parameterising with a search strategy.
\brassel~\cite{Kics2Theory} uses this semantics as a starting point.
He then introduces several transformations to the FlatCurry program
to put it in a form that can be readily translated to Haskell.
Our work is similar to \brassel's work.
We recall the syntax for FlatCurry, 
describe the changes we made, and give the semantics
for our new version of FlatCurry.
Our work differs from previous approaches because we encode the search
strategy into the evaluation itself.
This makes for a more complicated semantics,
but it allows us to examine the efficiency of Curry programs as a whole.
This can lead to new optimizations 
like fast backtracking~\cite{ricePaper,rice}.

\section{FlatCurry}

We begin with a discussion of the 
abstract syntax of FlatCurry~\cite{currySemantics}.
The language itself is similar to Core Haskell
with a few important alterations to support 
the features of functional logic programming.
The syntax is given in Figure \ref{fig:flat}.
The semantics for FlatCurry are recalled in Figure \ref{fig:nat}.
In this semantics, free variables are represented as a variable
that is mapped to itself in the heap $\Gamma[x \mapsto x]$.

The most apparent difference is the addition of the choice operator $?$,
the free variable declaration \cfree{\vec{x}}{e}, and the $\bot$ constant.
The choice operator and free declaration correspond to their counterparts
in the Curry language.  The $\bot$ constant represents a failed computation.
As we will see, $\bot$ is propagated up through the computation until it is
disregarded, or it reaches the root, 
at which point we say the computation fails.

Another peculiarity of this syntax is that there is no general form for
application.
In fact, that are not any lambda expressions in the language.
However, this last part is only an apparent difference
as the compiler has already completed lambda lifting~\cite{lambdalifting} 
by the time it produces FlatCurry code.

The absence of a general application form is slightly more complicated.
We can apply a function to arguments, 
but in this syntax there is no mechanism to apply a function to another
function.
This restriction would make it impossible to support higher order functions.

We solve this problem with a general \texttt{apply} function.
In the expression $(\code{apply}\ e_1\ e_2)$ will evaluate $e_1$ to a function
or partial application and apply it to $e_2$.
In the theory, this is handled with defunctionalization~\cite{TheoryToCurry},
so it is left out of the natural semantics of Curry~\cite{currySemantics}.
However, we will handle apply with 
the standard eval-apply method~\cite{evalApply}.

We also make a few changes from previous presentations of the
syntax~\cite{currySemantics,TheoryToCurry}.
We include failure as the value $\bot$.
This allows us to encode failing computations 
explicitly in our execution model,
and to show how failure propagates throughout a computation.
We also have a separate construct for declaring free variables
because the standard implementation of FlatCurry
also has a separate construct.

Finally, literals, primitive operations, and residuation are
all outside the scope of this paper, although not outside the 
scope of the compiler.
This is done both for brevity and because their
implementations are not novel.

\begin{figure}
\centering
\captionsetup{justification=centering}
\begin{tabular}{lcll}
$f$ & $\To$ & $\code{f} \vx = e$          & function definition \\
$e$ & $\To$ & $x$                         & Variable \\
    & $|$   & $e_1\ ?\ e_2$               & Choice \\
    & $|$   & $\bot$                      & Failed \\
    & $|$   & $\code{f}\ \vec{e}$         & Function Application \\
    & $|$   & $\code{C}\ \vec{e}$         & Constructor Application \\
    & $|$   & \clet{\vec{x = e}}{e}       & Variable Declaration \\
    & $|$   & \cfree{\vec{x}}{e}          & Free Variable \\
    & $|$   & \ccase{e}{\vec{p \to e}}    & Case Expression \\
$p$ & $\To$ & $\code{C}\ \vec{x}$         & Constructor Pattern \\
    & $|$   & $l$                         & Literal Pattern \\
\end{tabular}
    \caption{The syntax of FlatCurry
             We use the convention of $x$ for variables, 
             \texttt{f} for function names,
             and \texttt{C} for constructor names.}
    \label{fig:flat}
\end{figure}

\begin{figure}
\centering
\captionsetup{justification=centering}
\begin{tabular}{ll}
(Nat-VarCons) & 
\atom{$\Gamma[x \mapsto t] : x \Downarrow_C \Gamma[x \mapsto t] : t$}
\DisplayProof where $t$ is a head normal form\\
 & \\
(Nat-VarExp) & 
\atom{$\Gamma[x \mapsto e] : e \Downarrow_C \Delta : v$}
\unary{$\Gamma[x \mapsto e] : x \Downarrow_C \Gamma[x \mapsto v] : v$}
\DisplayProof  where $e$ is not a head normal form\\
 & \\
(Nat-Val) &
\atom{$\Gamma : v \Downarrow_C \Gamma : v$}
\DisplayProof  Where $v$ is a head normal form\\
 & \\
(Nat-Fun) &
\atom{$\Gamma : \rho(e) \Downarrow_C \Delta : v$}
\unary{$\Gamma : f\ \vec{y} \Downarrow_C \Delta : v$}
    \DisplayProof  where $f\ \vec x = e\in P$ is a defined function 
                   and $\rho(x_n) = y_n$\\
    &  \\
(Nat-Let) &
\atom{$\Gamma[\vec{y_k\mapsto \rho(e_k)}] : e \Downarrow_C \Delta : v$}
\unary{$\Gamma : $\clet{\vec{x_k = e_k}}{e}
       $\Downarrow_C \Delta : v$}
\DisplayProof where $\rho(x_n) = y_n$\\
    & \\
(Nat-Or) &
\atom{$\Gamma : e_i \Downarrow_C \Delta : v$}
\unary{$\Gamma : e_1\ ?\ e_2 \Downarrow_C \Delta : v$}
    \DisplayProof  where $i\in \{1,2\}$\\
    & \\
(Nat-Select) &
\atom{$\Gamma : e \Downarrow_C \Delta : C_i(\vec z)$}
    \atom{$\Delta : e_i[\vec{y\mapsto z}] \Downarrow_C \Theta : v$}
\binary{$\Gamma :$ \ccase{e}{\vec{C_i(\vec y) \to e_i}} 
        $\Downarrow_C \Theta : v$}
\DisplayProof \\
    & \\
(Nat-Guess) &
\atom{$\Gamma : e 
       \Downarrow_C \Delta : x$}
\atom{$\Delta[x\mapsto C_i(\vec y)][\vec{y\mapsto y}] : e_i 
       \Downarrow_C \Theta : v$}
\binary{$\Gamma :$ \ccase{e}{\vec{C_i(\vec y) \to e_i}}
         $\Downarrow_C \Theta : v$}
\DisplayProof \\
\end{tabular}
    \caption{Natural semantics for Curry~\cite{currySemantics}\\
             Following the conventions, $\Gamma[x \mapsto v]$
             can be used to lookup or update variable $x$ in heap 
             $\Gamma$ with value $v$.}
\label{fig:nat}
\end{figure}

\section{Restricted FlatCurry}

We now turn our attention to transforming FlatCurry programs
so that they are more amenable to an implementation.
Our restricted language is very similar to \brassel's flat uniform
programs~\cite{Kics2Theory},
and is similar, although not identical to, A-Normal form~\cite{anf}.
The primary restrictions of Restricted FlatCurry
are that functions and constructors are only applied to trivial arguments;
let expressions cannot be nested; and each function definition can contain
at most one case expression.
We split the syntax into three sections: 
blocks, statements, and expressions.
A block consists of a single \texttt{case}
where each of the branches contains statements.
A statement consists of zero or more 
\texttt{let} expressions, followed by an expression.
Finally, an expression can be either a variable, literal,
choice, failure, function application, 
constructor application, or \texttt{free}.
We only allow a single case for each function,
and all declarations must occur as early as possible.
Furthermore, all applications, including function, constructor, and choice,
must be applied to variables.
The difference between the body, statements, and expressions
is only for the purposes of giving structure to Curry functions.
An example of a FlatCurry program, and a restricted FlatCurry program
can be seen in Figure \ref{fig:flatEX}.
Throughout the rest of the paper, we will refer to everything
as an expression.
This structure closely corresponds to the structure of ICurry~\cite{icurry}.

We changed the representation of free variables in this syntax
to correspond with their role in the \rice{} runtime.
A free variable is a normal form that case expressions
can narrow.
It would be perfectly sensible to replace the free variable with a generator
at this point~\cite{TheoryToCurry}, but \rice{} implements narrowing.

\begin{figure}
\centering
\captionsetup{justification=centering}
\textbf{Curry}
\begin{curry}
and False _ = False
and True False = False
and True True = True
\end{curry}
\textbf{FlatCurry}
\begin{curry}
and x y = case x of
               False -> False
               True -> case y of
                            False -> False
                            True -> True
\end{curry}
\textbf{restricted FlatCurry}
\begin{curry}
and x y = case x of
               False -> False
               True -> and1 y

and1 y = case y of
              False -> False
              True -> True
\end{curry}

\caption{A Curry function, a FlatCurry function, and a restricted FlatCurry
    function.}
\label{fig:flatEX}
\end{figure}

\begin{figure}
\centering
\captionsetup{justification=centering}
\begin{tabular}{lcll}
$f$ & $\To$ & $f\ \vec{x} = b$            & Function Definition \\
$b$ & $\To$ & \ccase{x}{\vec{p \to s}}    & Case Expression \\
    & $|$   & $s$                         & statement \\
$s$ & $\To$ & \clet{\vec{x = e}}{s}       & Variable Declaration \\
    & $|$   & $e$                         & Return Expression \\
$e$ & $\To$ & $x$                         & Variable  \\
    & $|$   & $l$                         & Literal \\
    & $|$   & $\bot$                      & Failed \\
    & $|$   & $x\ ?\ x$                   & Choice \\
    & $|$   & \code{free}                 & Free variable \\
    & $|$   & $f\ \vec{x}$                & Function Application \\
    & $|$   & $C\ \vec{x}$                & Constructor Application \\
    & $|$   & $\code{apply}\ x\ \vec{x}$  & Application \\
$p$ & $\To$ & $C\ \vec{x}$                & Constructor Pattern \\
    & $|$   & $l$                         & Literal Pattern \\
\end{tabular}
\caption{Syntax of restricted FlatCurry}
\label{ref:restrict}
\end{figure}

\section{Heap Representation}

Now that we have a syntax for Curry, we can discuss the execution model.
A Curry program consists of a set of functions as well as a single 
expression to evaluate.  
The expression is represented as a directed rooted graph 
that we will continually reduce.
The graph plays the same role as the heap in traditional
implementations of functional languages~\cite{continuationsAppel,stg}.
We refer to it as a graph to stay closer in line with the theory of Curry.

The graph nodes are given in Figure \ref{fig:graph}.
We use the notation $\code{f}(x)$ to represent a function application
to distinguish it from the FlatCurry syntax.
Specifically, this represents the node $\code f$, with a single child $x$.
If $G$ is a graph with node $n$, then $G[n]$ refers to the subgraph
rooted by node $n$, and $G[n \mapsto g]$ means 
replace node $n$ in $G$ with the graph $g$.
We also use the convention that 
if a node $n$ is referred to more than once, then it is shared.
For example, if $G[n] = True$, then $xor(n,n)$ refers to the following graph.
$$
\begin{tikzcd}
    \text{\code{xor}} \ar[d, bend right=50] \ar[d, bend left=50] \\
    \text{\code{True}}
\end{tikzcd}
$$

We discuss the different nodes below.
The graph contains seven different types of nodes:
$\bot$, \code{free}, \code{?}, function, constructor,
forwarding, and partial application nodes..

The $\bot$ and \code{free} nodes are the most direct nodes.
The $\bot$ node represents a failing computation,
and only serves to propagate the failure to the root of the expression.
The \code{free} node represents a free variable.

The \code{?} node represents a choice.
We treat choices in a similar way to constructors.
We do not immediately choose a branch, but instead defer it
until the value is demanded by a case expression.
However, this is only an apparent difference.
Any expression that is reduced to a choice will immediately demand
the value of that choice.
This simplifies the execution model, and there does not seem
to be a measurable performance loss for delaying the evaluation of choice.

Function and constructor applications are always fully applied.
For partial applications, we have the \texttt{PART} node.
This node contains three things: a function or constructor to apply,
a number $k$ representing the number of arguments the function is missing,
and finally the arguments that have already been partially applied.
In the implementation, the function is represented by a closure.

Finally, $\FWD(g)$ represents a forwarding, or indirection, node.
The notation is supposed to resemble a pointer.
Forwarding nodes are necessary for a function that 
returns one of its parameters.
Consider the following example.

\begin{curry}
id x = x

main = let x = True ? False
       in xor (id x) (id x)
\end{curry}

The graph representing our expression is:
$$
\begin{tikzcd}
    & \text{\code{xor}} \ar[dl] \ar[dr] & \\
    \text{\code{id}} \ar[dr] & & \text{\code{id}}  \ar[dl] \\
    & \text{\code{?}} \ar[dl] \ar[dr] & \\
    \text{\code{True}} & & \text{\code{False}} \\
\end{tikzcd}
$$

If we take an approach similar to 
Peyton Jones~\cite{lazyFunctionalCompilers} and 
copy the constructor after evaluation, we end up with the following graph.
$$
\begin{tikzcd}
    & \text{\code{xor}} \ar[dl] \ar[dr] & \\
    \text{\code{True}} & & \text{\code{id}}  \ar[d] \\
    & & \text{\code{?}} \ar[dl] \ar[dr] & \\
    & \text{\code{True}} & & \text{\code{False}} \\
\end{tikzcd}
$$

This will certainly cause problems as we try to backtrack,
because we need to replace both copies of \code{True}.
Instead, we solve this problem with the forwarding node $\FWD(g)$.
If a Curry function evaluates to a parameter of the function,
then we construct a forwarding node to maintain the structure of the graph,
and prevent any unintended copying.
Our example from before evaluates to the following.
$$
\begin{tikzcd}
    & \text{\code{xor}} \ar[dl] \ar[dr] & \\
    \text{\FWD} \ar[dr] & & \text{\FWD}  \ar[dl] \\
    & \text{\code{?}} \ar[dl] \ar[dr] & \\
    \text{\code{True}} & & \text{\code{False}} \\
\end{tikzcd}
$$

\begin{figure}
\centering
\captionsetup{justification=centering}
\begin{tabular}{lcll}
$g$ & $\To$ & $\bot$                     & Failed \\
    & $|$   & \code{free}                & Free variable \\
    & $|$   & $\code{?}(g,g)$            & Choice \\
    & $|$   & $\code{f}(\vec{g})$        & Function Application \\
    & $|$   & $\code{C}(\vec{g})$        & Constructor Application \\
    & $|$   & $\FWD(g)$                  & Forwarding Node \\
    & $|$   & $\code{PART}(\code{f},k,\vec{g})$ & Partial Application Node \\
\end{tabular}
\caption{Heap objects represented as a graph.}
\label{fig:graph}
\end{figure}

\section{The Execution Model}

To run a Curry program, we evaluate the expression \texttt{main}
to normal form (or a value).
We can accomplish this by successively evaluating 
\texttt{main} to a head normal form, 
which is a form rooted by a constructor or literal.
We then successively evaluate the children of \texttt{main} to normal form.
If \texttt{main} evaluates to a value, then we display it to the user;
if it evaluates to $\bot$, we discard the value.
In either case, we backtrack and try again.
The backtracking scheme is well understood and used 
in both \pakcs\ and \mcc~\cite{pakcs,mcc,TheoryToCurry}.

In order to implement backtracking,
we need to keep track of a backtracking stack.
We represent the backtracking stack as 
a list of frames enclosed in angle brackets.

\begin{tabular}{lll}
    $S$ & $=$ & $\langle \rangle $ \\
    $S$ & $=$ & $\langle g_{[?]},g | S\rangle$
\end{tabular}

A stack is empty, or it contains two nodes from the heap.
The left node is the current value in the heap, 
and the right node contains a value to replace it with when backtracking.
We use the notation $g_?$ to denote that node $g$ came from a choice,
and therefore backtracking should stop at this node.

We introduce four relations in Figure \ref{fig:rel}
for evaluation to normal form, head normal form,
backtracking a single step, and backtracking to a choice.
The normal form relation evaluates an expression to a value
as described above.  The graph $G$ and stack $S$ may be changed over
the course of evaluation.
The evaluation to head normal form is similar, but only evaluates $e$
to an expression rooted by a constructor.
Finally, we have two backtracking relations.
The first only undoes a single rewrite from the stack.
The second one pops rewrites off that stack until we reach a rewrite
that came from a choice.
The rules for evaluating to normal form and backtracking are given in
Figure \ref{fig:norm}.
We use the standard $\Downarrow^n$ notation to refer to the n-fold
composition of the $\Downarrow$ relation.
Most rules are standard, but the rule for choice may be surprising.
If an expression is evaluated to a choice,
then we choose the left hand side, and evaluate that to normal form.
We push the right hand side on the stack so we can backtrack 
to it later.

\begin{figure}
\centering
\captionsetup{justification=centering}
\begin{tabular}{ll}
    $G,S : e \Downarrow_N G,S : v$ & evaluation to Normal Form \\
    $G,S : e \Downarrow G,S : v$ & evaluation to Head Normal Form \\
    $G,S \Downarrow_B G,S$ & Backtracking \\
    $G,S \Downarrow_{B?} G,S$ & Backtracking to a choice\\
\end{tabular}
\caption{evaluation relations\\
         expression $e$ with graph $G$ and stack $S$ evaluates to
         value $v$ with a possibly modified $G$ and $S$.}
\label{fig:rel}
\end{figure}

\begin{figure}
\centering
\captionsetup{justification=centering}
\begin{tabular}{ll}
    (BT) &
   \atom{$G,\langle x,y | S \rangle 
          \Downarrow_B G[x\mapsto y], S$}
   \DisplayProof \\
    & \\
    (BT-Choice) &
   \atom{$G,\langle \vec{x,y}\ |\ l_?,r\ |\ S \rangle 
          \Downarrow_{B?} G[\vec{x\mapsto y}][l\mapsto r], 
          \langle r, ?(l,r)|S\rangle$}
    \DisplayProof \\
    & \\
    (Norm-Bot) &
    \atom{$G,S:e \Downarrow G_1,S_1:\bot$}
    \unary{$G,S:e \Downarrow_N G_1,S_1:\bot$}
    \DisplayProof \\
    & \\
    (Norm-Lit) &
    \atom{$G,S:e \Downarrow G_1,S_1:l$}
    \unary{$G,S:e \Downarrow_N G_1,S_1:l$}
    \DisplayProof \\
    & \\
    (Norm-Free) &
    \atom{$G,S:e \Downarrow G_1,S_1:\tfree$}
    \unary{$G,S:e \Downarrow_N G_1,S_1:\tfree$}
    \DisplayProof \\
    & \\
    (Norm-Con) &
    \atom{$G,S:e \Downarrow G_0,S_0:C(\vec{e})$}
    \atom{$\vec{G_i,S_i:e_i \Downarrow_N G_{i+1},S_{i+1}:v_i}$}
    \binary{$G,S:e \Downarrow_N G_n,S_n:C(\vec{v})$}
    \DisplayProof \\
    & \\
    (Norm-Choice) &
    \atom{$G,S:e \Downarrow G_1,S_1:?(x,y)$}
    \atom{$G_1[r\mapsto \FWD(x)],\langle r_?, \FWD(y) | S_1 \rangle:x 
          \Downarrow_N G_2,S_2:v$}
    \binary{$G,S:e \Downarrow_N G_2,S_2:\FWD(v)$}
    \DisplayProof \\
\end{tabular}
\caption{backtracking and normalization algorithm.\\
         in (Norm-Choice) $r$ is the root of the expression $e$}
\label{fig:norm}
\end{figure}

While backtracking and evaluation to normal form are typical,
evaluation to head normal form requires more explanation.
We split the rules up into three parts: basic rules in Figure \ref{fig:basic},
rules for a \texttt{case} in Figure \ref{fig:case}, and rules for \texttt{apply}
in Figure \ref{fig:apply}.

The basic rules correspond closely to the
original FlatCurry semantics with the addition of the stack.
The rules are given in Figure \ref{fig:basic}.
The rules for (Bot), (Lit), (Free), and (Con) are already in head normal form,
so the evaluation is complete.
We also treat $?$ and \FWD\ as head normal forms.
This is still consistent with the previous semantics because
they will both be evaluated by case expressions.
The rule for (Let) simply adds each defined variable to the graph.
Because all functions are only applied to variables,
we treat expression variables the same as graph nodes.
The case for (Fun) is very similar to the previous semantics.
We replace the function call with the expression graph from the function's
definition and continue evaluation.

Finally, the (Var) case is trivial.
This seems surprising because the (Nat-VarExp) case was more complicated
in the previous semantics.
However, the only way an expression could evaluate to a variable that
was not the scrutinee of a case expression is if a function
returned one of its parameters.
In that case, we need to create a forwarding 
node for the reasons described above.

\begin{figure}
\centering
\captionsetup{justification=centering}
\begin{tabular}{lll}
(Bot) &
\atom{$G,S : \bot \Downarrow G, S : \bot$}
\DisplayProof \\
    & & \\
(Lit) &
\atom{$G,S : l \Downarrow G, S : l$}
\DisplayProof \\
    & & \\
(Free) &
\atom{$G,S : \tfree \Downarrow G, S : \tfree$}
\DisplayProof \\
    & & \\
(Con) &
\atom{$G,S : C\ \vec{e} \Downarrow G, S : C(\vec{e})$}
\DisplayProof \\
    & & \\
(Choice) &
\atom{$G,S : e_1\ ?\ e_2 \Downarrow G, S :\ ?(e_1,e_2)$}
\DisplayProof \\
    & & \\
(Fun) &
\atom{$f\ \vec{x} = e$}
\atom{$G, S : e[\vec{x \mapsto y}] \Downarrow G_1, S_1 : v$}
\binary{$G,S : f\ \vec{y} \Downarrow G_1, S_1 : v$}
\DisplayProof \\
    & & \\
(Let) &
\atom{$G[\vec{x\mapsto e}], S : e_1 \Downarrow G_1, S_1 : v$}
\unary{$G, S :$ \clet{\vec{x = e}}{e_1}
       $\Downarrow G_1, S_1 : v$}
\DisplayProof \\
    & & \\
(Var) &
\atom{$G, S : x \Downarrow G, S : \FWD(x)$}
\DisplayProof \\
    & & \\
\end{tabular}
\caption{Evaluation of expressions without \texttt{case}\\
         We assume all variables from function definitions are fresh.}
\label{fig:basic}
\end{figure}

More substantial changes start to appear in the case rules.
These rules correspond to the while/switch loop in the generated
C code for the \rice{} compiler~\cite[Chapter 4]{rice}.
Cases are only applied to variables,
so case expressions inspect the variable and evaluate it if necessary.
There is one case for each type of heap object we might scrutinize,
except for partial applications.  
Typing rules prevent partial applications from 
appearing as the scrutinee of a case.

The (Case-Bot) rule is the simplest rule; it only propagates the $\bot$ up.
The (Case-Fwd) rule unwraps its argument and tries again.
The (Case-Fun) rule evaluates a function, and updates the variable when it
returns.
This rule pushes $x$ with its old value $f(\vec y)$ onto the stack,
because that rewrite might need to be undone for backtracking.
(Case-Choice) will always choose the left-hand side,
and push the right-hand side as a non-deterministic rewrite 
$\langle x_?, \FWD(z)|S\rangle$ onto the backtracking stack.
We then update $x$ to be a forwarding node to the left-hand side $y$.
(Case-Lit) and (Case-Con) can select a branch for the case.
(Case-Con) has to replace the parameters of the constructor with 
the arguments.
Finally, (Case-LitFree) and (Case-ConFree) handle narrowing steps.
The free variable is instantiated to the pattern of the first branch.
If the branch is a constructor, then the children are filled with 
free variables.
We use the notation $e[\vec{y_i \mapsto \tfree_i}]$ to denote replacing each
free variable $y_i$ in expression $e$ with the corresponding
logic variable that was created in $C(\vec{\tfree})$.
The rest of the patterns are all pushed
onto the backtracking stack as rewrites for the free variable.

\begin{figure}
\centering
\captionsetup{justification=centering}
\begin{tabular}{lll}
(Case-Bot) &
$G[x \mapsto \bot],S :$ \ccase{x}{\vec{p \to e}} 
$\Downarrow G, S : \bot$ \\
& & \\
(Case-Fwd) &
\atom{$G,S :$ \ccase{y}{\vec{p \to e}} $\Downarrow G_1, S_1 : v$}
\unary{$G[x\mapsto \FWD(y)],S :$ \ccase{x}{\vec{p \to e}} 
    $\Downarrow G_1, S_1 : v$}
\DisplayProof \\
    & & \\
(Case-Fun) &
\atom{$G,S : f\ \vec{y} \Downarrow G_1,S_1 : v_x $}
\noLine
\unary{$G_1[x\mapsto v_x],
       \langle x,f(\vec{y}) | S_1 \rangle :$
       \ccase{v_x}{\vec{p \to e}}
      $\Downarrow G_2,S_2 : v $}
\unary{$G[x \mapsto f(\vec{y})],S :$ \ccase{x}{\vec{p \to e}}
         $\Downarrow G_2, S_2 : v$}
\DisplayProof \\
    & & \\
(Case-Choice) &
\atom{$G[x\mapsto \FWD(y)],
       \langle x_?, \FWD(z)|S\rangle :$
       \ccase{y}{\vec{p \to e}}
       $\Downarrow G_1,S_1 : v $}
\unary{$G[x \mapsto ?(y,z)],S :$ \ccase{x}{\vec{p \to e}}
         $\Downarrow G_1, S_1 : v$}
\DisplayProof \\
    & & \\
(Case-Lit) &
\atom{$G,S : e_i \Downarrow G_1,S_1 : v$}
\unary{$G[x\mapsto l_i],S :$ \ccase{x}{\vec{l \to e}}
       $\Downarrow G_1, S_1 : v$}
\DisplayProof \\
    & & \\
(Case-LitFree) &
\atom{$G[x\mapsto l_1],S_L : e_1 
       \Downarrow G_1,S_1 : v$}
\unary{$G[x\mapsto \tfree],S :$ \ccase{x}{\vec{l \to e}}
         $\Downarrow G_1, S_1 : v$}
\DisplayProof \\
    & & \\
(Case-Con) & 
\atom{$G,S : e_i[\vec{y \mapsto z}]
      \Downarrow G_1,S_1 : v$}
\unary{$G[x\mapsto C_i(\vec{z})],S :$ \ccase{x}{\vec{C\ \vec{y} \to e}}
        $\Downarrow G_1, S_1  : v$}
\DisplayProof \\
    & & \\
(Case-ConFree) & 
\atom{$G[x\mapsto C_1(\vec{\tfree})], S_C 
       : e_1[\vec{y_i \mapsto \tfree_i}]
      \Downarrow G_1,S_1 : v$}
\unary{$G[x\mapsto \tfree],S :$ \ccase{x}{\vec{C\ \vec{y} \to e}}
        $\Downarrow G_1, S_1 : v$}
\DisplayProof \\
\end{tabular}
\caption{rules for Case expressions.\\
         In Case-LitFree $S_L = \langle x,l_2| \ldots |x,l_n|
                                        x,\tfree|S_1\rangle$ \\
         In Case-ConFree $S_C = \langle x,C_2(\vec{\tfree})| \ldots|
                                        x,C_n(\vec{\tfree})| 
                                        x,\tfree | S_1 \rangle$
               }
\label{fig:case}
\end{figure}

The final rules are the rules for the \texttt{apply} function.
We handle apply with the eval-apply method~\cite{evalApply}.
In fact, 
the logic features have no bearing on partial application.
If we are applying a choice node, then we select the leftmost node
and try again.
If we are applying a free variable, then we fail.

The remaining possibilities of partial application are split into three cases.
If the partial application is under applied, then we simply add
the new arguments to the application and move on.
If the application is correctly applied, then we evaluate the function
with the arguments.
Finally, if the application is over applied, then it must evaluate to a 
\texttt{PART}; we take the first few arguments, evaluate to the \texttt{PART},
and supply the final arguments.

\begin{figure}
\centering
\captionsetup{justification=centering}
\begin{tabular}{ll}
(Apply-Free) & 
\atom{$G[x\mapsto \tfree],S : \code{apply}\ x\ \vec{e} 
    \Downarrow G, S : \bot$}
\DisplayProof \\
    & \\
(Apply-Choice) & 
\atom{$G[x\mapsto \FWD(y)],\langle x_?, \FWD(z)|S\rangle : 
      \code{apply}\ y\ \vec{e} 
    \Downarrow G_1, S_1 : v$}
\unary{$G[x\mapsto ?(y,z)],S : \code{apply}\ x\ \vec{e} 
    \Downarrow G_1, S_1 : v$}
\DisplayProof \\
    & \\
(Apply-Under) & 
\atom{$|\vec{e}| = n < k$}
\unary{$G_x,S : \code{apply}\ x\ \vec{e} 
    \Downarrow G, S : \code{PART}(f, k-n, \vec{ye})$}
\DisplayProof \\
    & \\

(Apply-Full) & 
    \atom{$G,S : f\ \vec{y}\ \vec{e_k} \Downarrow G_1,S_1:v$}
    \unary{$G_x,S : \code{apply}\ x\ \vec{e_k}
    \Downarrow G_1, S_1 : v$}
\DisplayProof \\
    & \\
(Apply-Over) & 
\atom{$G,S : f\ \vec{y}\ \vec{e_k} \Downarrow G_1,S_1:v_1$}
    \atom{ $G_1,S_1 : \code{apply}\ v_1\ \vec{e_{k+1}}
    \Downarrow G_2,S_2:v_2$}
    \binary{$G_x,S : \code{apply}\ x\ \vec{e_k} \vec{e_{k+1}}
    \Downarrow G_2, S_2 : v_2$}
\DisplayProof \\
\end{tabular}
\caption{apply rules. \\
         In all 3 rules $G_x = G[x \mapsto \code{PART}(f,k,\vec{y})]$ \\
         In (apply-over) $\vec{e_k} = e_1\ldots e_k$ \\
         In (apply-over) $\vec{e_{k+1}} = e_{k+1}\ldots e_n$
         }
\label{fig:apply}
\end{figure}

\section{Correspondence to Generated Code}

The purpose of this semantics is to model the execution of programs
compiled with \rice{}. In order to justify that the semantics really
does correspond with the compiled code, we give a small example of 
compiled \rice{} code for the \code{not} function defined below
\begin{curry}
not x = case x of
             True -> False
             False -> True
\end{curry}

In the \rice{} compiler Curry expressions are compiled C code.
The graph nodes from the semantics are represented as \code{Node} objects,
which correspond to closures in a traditional functional language.
Its definition is given in Figure \ref{fig:node}.
Each \code{Node} has 3 fields.  The missing field is used for partial
application, the symbol field contains information about the \code{Node}
such as it's name and arity and tag describing what node it is,
as well as a pointer to code to reduce the
node to head normal form.
Finally, each node has an array of 4 children.
If a node has arity greater than 4, the final slot is a pointer
to an array containing the rest of the children.
\begin{figure}
\begin{ccode}
typedef struct Node {
    const unsigned char tag;
    int missing;
    Symbol* symbol;
    field children[4];
} Node;
\end{ccode}
\caption{Definition for a \code{Node} object}
\label{fig:node}
\end{figure}

The code for reducing the expression \code{not e} for some expression \code e
is given in Figure \ref{fig:notHNF}.
We continue to loop until we reduce to a head normal form.
The branches in the case corresponding to the different Case rules
in Figure \ref{fig:case}.
The rules for \code{FAIL}, \code{True}, and \code{False} just set the symbol,
and remove the child and return.
The forward tag sets the scrutinee to be its first child and retries.
The choice tag makes sets the scrutinee to be
one of its children and pushes in on the stack.
The details are elided here.
Finally, The function rule reduces the scrutinee and 
pushes it on the backtracking stack which we call \code{bt_stack}.

\begin{figure}
\begin{ccode}
void not_hnf(field root) {
  Node* scrutinee = root->children[0];
  while(true) {
    switch(scrutinee->tag) {
      case FAIL_TAG:
        root->scrutinee = FAIL_symbol;
        root->children[0] = NULL;
        return;
      case FORWARD_TAG:
        scrutinee = scrutinee->children[0];
        break;
      case CHOICE_TAG:
        choose(scrutinee);
        break;
      case FUNCTION_TAG:
        scrutinee->symbol->hnf(srutinee);
        push(bt_stack, scrutinee, false);
        break;
      case True_TAG:
        root->symbol = False_Symbol;
        root->children[0] = NULL;
        return;
      case False_TAG:
        root->symbol = True_Symbol;
        root->children[0] = NULL;
        return;
    }
  }
}
\end{ccode}
    \caption{Code for reducing a $not$ node to head normal form.}
    \label{fig:notHNF}
\end{figure}

\section{Correctness}

With the semantics now established, we need to show that they actually
implement Curry.
We do this via comparison to the original semantics~\cite{currySemantics}.
However, the original semantics were non-deterministic,
and we cannot hope to match them.
Because we are not using a fair evaluation strategy,
there will be answers that we may not produce in a finite amount of time.

Nevertheless, we can still show that we agree with the original semantics
in restricted cases.  Specifically, we show that for terminating programs,
we produce the same answers as the original semantics,
which is not surprising.
If we remove the stack from our semantics, then we match the original
fairly closely.

We start by proving that the backtracking 
operation does backtrack as we claim.
Specifically, we show that if we have a graph $G$ and expression $e$
that we evaluate to the new graph $G'$ and $v$,
then there is some number of backtracking steps 
that will restore the original graph.

\begin{theorem}
For any expression graph $G$, stack $S$, and expression $e$
if $G,S:e \Downarrow_N G',S':v$, 
then there exists some $n$ where 
$G',S' \Downarrow_B^n G\cup G_u, S$ where $G_u$
is a set of unreachable bindings created after $e$ was evaluated.
\end{theorem}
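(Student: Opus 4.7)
The plan is to prove this by structural induction on the derivation of $G,S : e \Downarrow_N G',S' : v$. Because $\Downarrow_N$ is mutually defined with $\Downarrow$ (head normal form), I would actually state and prove the stronger mutual claim: for every evaluation judgement in Figures \ref{fig:norm}, \ref{fig:basic}, \ref{fig:case}, and \ref{fig:apply} of the form $G,S : e \Downarrow_\star G',S' : v$, there exists an $n$ such that $G',S' \Downarrow_B^n G \cup G_u, S$, where $G_u$ is a set of nodes unreachable from the roots of $G$ and $S$. The set $G_u$ accounts precisely for the fresh bindings introduced by (Let), by the fresh parameter renaming in (Fun), and by the instantiation slots allocated in (Case-LitFree)/(Case-ConFree); these are garbage after backtracking but do not interfere with the ability to restore the original graph.

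The base cases are the rules that neither touch the graph nor the stack, namely (Bot), (Lit), (Free), (Con), (Choice), (Var), and (Apply-Under), where $n=0$ suffices. The inductive cases split into two families. The \emph{transparent} recursive rules (e.g.\ (Norm-Bot), (Norm-Lit), (Norm-Free), (Fun), (Let), (Case-Fwd), (Case-Lit), (Case-Con), (Apply-Full)) either leave the stack unchanged or only add unreachable garbage bindings, so the inductive hypothesis applied to the subderivation gives the required $n$ directly. The \emph{stack-pushing} rules are the interesting ones: (Case-Fun), (Case-Choice), (Case-LitFree), (Case-ConFree), (Norm-Choice), and (Apply-Choice). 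For each of these I would show that the top frame pushed onto $S$ mirrors exactly the graph update performed in the premise, so that a single (BT) step undoes that update and restores the pre-rule graph; composing with the $n$ steps supplied by the inductive hypothesis yields $n+1$ (or more, for the narrowing rules that push several frames). The rules that perform two recursive evaluations in sequence, most notably (Norm-Con) and (Norm-Choice) and (Case-Fun) and (Apply-Over), require us to add the counts $n_i$ produced by the successive hypotheses; this composition is sound because the later premises take as their initial graph/stack exactly the output of the earlier ones, so the backtracking counts simply accumulate.

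The main obstacle will be the narrowing rules (Case-LitFree) and (Case-ConFree). There the stack is extended with a whole block $\langle x, l_2 | \ldots | x, \tfree | S_1\rangle$ (respectively a block ending in $x,\tfree$), so I must check that popping this block with (BT) successively rewrites $x$ through the intermediate literal/constructor values back to the original $\tfree$. In particular, for (Case-ConFree) the fresh free-variable children introduced in $C_1(\vec{\tfree})$ must be accounted for in $G_u$, since they remain in the heap after the binding of $x$ is restored. I would verify by a direct unfolding of $\Downarrow_B$ that $k$ applications of (BT) suffice to consume one such block, leaving the correct $G[x\mapsto \tfree]$ modulo unreachable children, and then compose with the inductive hypothesis.

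Finally, a small auxiliary observation is needed: the rule (BT-Choice) is never invoked by $\Downarrow_B$, so it does not enter this theorem. I would also note that the result is existential and non-constructive in $n$, which is exactly what is needed — the subsequent correspondence theorem with the Albert et al.\ semantics will use this fact to simulate an arbitrary deterministic search prefix by backtracking to the original configuration.
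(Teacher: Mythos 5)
Your overall strategy---structural induction on the derivation, dispatching the rules that touch neither $G$ nor $S$, additive composition of backtracking counts across sequential premises, and popping the whole narrowing block down to its final $x,\tfree$ frame---is the same as the paper's, and those parts are sound. The gap is in your treatment of the choice rules. You assert that for every stack-pushing rule ``the top frame pushed onto $S$ mirrors exactly the graph update performed in the premise, so that a single (BT) step undoes that update and restores the pre-rule graph,'' and separately that (BT-Choice) never enters the argument. For (Case-Fun) the first claim is true: the frame $\langle x, f(\vec y)\rangle$ records the old value of $x$, and one (BT) step restores it. But for (Case-Choice), (Norm-Choice), and (Apply-Choice) the frame pushed is $\langle x_?, \FWD(z)\rangle$ while the update performed was $x : {?}(y,z) \mapsto \FWD(y)$. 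That frame records the \emph{untried alternative}, not the old value; applying (BT) to it yields $G[x\mapsto \FWD(z)]$, which is not $G[x \mapsto {?}(y,z)]$, and $x$ is reachable, so the discrepancy cannot be swept into $G_u$. Your stated invariant is therefore false precisely at the choice cases, and the single-step argument does not close the induction there.

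The paper resolves this by giving the backtracking step over a choice frame the (BT-Choice)-like behaviour of installing the alternative \emph{and} pushing a restore frame $\langle x, {?}(y,z)\rangle$, so that one further $\Downarrow_B$ step re-installs the original choice node ($n+2$ steps in total for these cases). Whether one reads that as part of $\Downarrow_B$ on $?$-marked frames or as an appeal to $\Downarrow_{B?}$, some such mechanism is required; by explicitly excluding (BT-Choice) you have foreclosed the only way to recover $?(y,z)$. To repair your proof you must either adopt that two-step treatment of choice frames or weaken the theorem to say the graph is restored only up to replacing each consumed choice node by a forwarding node to its right alternative.
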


\begin{proof}
The proof is by structural induction on the derivation of $e$.
The only rules that alter $G$ are 
(Let), (Case-Choice), (Case-Fun). (Case-LitFree), (Case-ConFree),
(Norm-Choice), and (Apply-Choice).
All other cases are trivial because they do not modify the stack
or the graph.
The (Let) rule can only add new bindings to the graph,
so these new bindings may go in $G_u$.

(Case-Fun):
If $G[x \mapsto f(\vec y)]$, then there are two evaluations
that take place: $G,S:f(\vec y) \Downarrow G_1,S_1:v_x$
and $G[x\mapsto v_x], \langle x,f(\vec{y}) | S_1 \rangle :$
     \ccase{v_x}{\vec{p \to e}} $\Downarrow G_2,S_2 : v $.
By our inductive hypotheses 
$G_2,S_2 \Downarrow_B^n G_1[x \mapsto v_x]\cup G_{u2},
\langle x, f(\vec y) | S_1 \rangle$,
so $G_2,S_2 \Downarrow_B^{n+1} G_1[x \mapsto f(\vec{y})]\cup G_{u2},S_1$.
Again, by our induction hypothesis
$G_1,S_1 \Downarrow_B^m G\cup G_{u1},S$.
Therefore, $G_2,S_2 \Downarrow_B^{m+n+1} G\cup(G_{u1}\cup G_{u2}), S$.

The cases for (Case-Choice), (Case-LitFree), (Case-ConFree),
(Norm-Choice), and (Apply-Choice)
are all similar, but there is one slight alteration.
We use (Case-Choice) as an example.
If $G[x \mapsto ?(y,z)]$,
then there is only one derivation.
$G[x\mapsto \FWD(y)],
   \langle x_?, \FWD(z)|S\rangle :$
   \ccase{y}{\vec{p \to e}}
   $\Downarrow G_1,S_1 : v $
By the induction hypothesis 
$G_1,S_1 \Downarrow_B^{n} G[x\mapsto \FWD(y)]\cup G_n,
          \langle x_?, \FWD(z)|S\rangle$.
Our next backtracking step will replace $y$ with $z$
and add something new to the stack.
$G_1,S_1 \Downarrow_B^{n+1} G[x\mapsto \FWD(z)]\cup G_n,
          \langle x, ?(y,z)|S\rangle$.
This is fine because the next backtracking 
step will restore the stack.
$G_1,S_1 \Downarrow_B^{n+2} G\cup G_n, S$.
This completes the proof.
\end{proof}

Next, we show that for any graph $G$, stack $S$, and expression $e$,
if we evaluate using our semantics, then we produce the same value
as the natural semantics~\cite{currySemantics} assuming all choices
choose the left hand side. We use $\Downarrow_C$ as the evaluation
relation from the natural semantics. We recall the rules
for the natural semantics in Figure \ref{fig:nat}.
Because there is only a relation for head normal forms, and not normal forms,
we restrict ourselves to evaluations that terminate in a constructor or
literal.

\begin{theorem}
    If $G,S:e\Downarrow G',S':v$, Where $v$ is a constructor or a literal,
    then there is a heap $\Gamma$ that
    corresponds to $G$, and a heap $\Gamma'$ corresponding to $G'$
    such that $\Gamma : e \Downarrow_C \Gamma' : v'$
    where $v'$ is the same as $v$ with the forward nodes contracted.
\end{theorem}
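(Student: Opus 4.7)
The plan is to proceed by structural induction on the derivation of $G,S : e \Downarrow G',S' : v$, restricted \emph{a priori} to derivations whose final value $v$ is a constructor or literal. The first step is to make the graph-to-heap correspondence precise: each non-forward node $x$ of $G$ becomes a binding in $\Gamma$ whose right-hand side is $G[x]$ with every direct child $y$ replaced by the forward-contracted target of $y$ in $G$, and every free-variable node $G[x] = \tfree$ becomes the self-binding $\Gamma[x \mapsto x]$ that Albert et al.\ use for logic variables. Contracting the forwards at the root of $v$ then yields exactly the $v'$ demanded by the statement.

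With this correspondence fixed, the induction proceeds rule by rule. The base cases (Lit) and (Con) match (Nat-Val) directly, while (Bot), (Free), (Choice), (Var), and (Case-Bot) are excluded by the hypothesis that $v$ is a constructor or literal. The rules (Fun) and (Let) mirror (Nat-Fun) and (Nat-Let), once one observes that our argument renaming $[\vec{x \mapsto y}]$ plays the role of the $\rho$-substitution in (Nat-Fun) and our graph extension corresponds to the heap extension in (Nat-Let). The selection rules (Case-Lit) and (Case-Con) are each simulated by two natural-semantics steps: (Nat-VarCons) looks up the value already stored in the heap, followed by (Nat-Select) to pick and reduce the right branch. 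The narrowing rules (Case-LitFree) and (Case-ConFree) correspond to (Nat-Guess), whose bindings $[\vec{y \mapsto y}]$ are matched by the fresh $\tfree$ children installed in $G$. The (Case-Fwd) rule consumes no natural-semantics step, since following the forward is absorbed into the correspondence. The (Case-Fun) rule decomposes as (Nat-VarExp) for the function call followed by (Nat-Select) for the branch. Finally, (Case-Choice) maps to (Nat-Or) with $i = 1$, because we always take the left disjunct; the pushed backtracking entry on $S$ is invisible to $\Downarrow_C$.

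The main obstacle I foresee is the bookkeeping around in-place heap updates: several rules---notably (Case-Fun), (Case-Choice), and the free-variable narrowing cases---mutate $G$ as evaluation progresses, and I must check, rule by rule, that the updated graph $G_1$ still corresponds to the updated heap $\Delta$ produced by the matching natural-semantics derivation. A closely related subtlety is the treatment of the fresh $\tfree$ nodes introduced by (Case-ConFree): the self-bindings added by (Nat-Guess) must line up with these fresh nodes, and the substitution $e_1[\vec{y_i \mapsto \tfree_i}]$ on our side must coincide with the pattern substitution on the natural side. Once this case analysis is sustained through the induction, the stack plays no role for the terminating, leftmost-choice evaluations covered by the theorem, and the correspondence between $\Downarrow$ and $\Downarrow_C$ follows.
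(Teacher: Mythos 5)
Your proposal follows essentially the same route as the paper: a rule-by-rule simulation in which (Lit)/(Con) map to (Nat-Val), (Fun)/(Let) to (Nat-Fun)/(Nat-Let), the case rules decompose into (Nat-VarCons)/(Nat-VarExp) plus (Nat-Select), narrowing maps to (Nat-Guess), (Case-Choice) to (Nat-Or) with the left branch, forwards are contracted away, and the stack is discarded as irrelevant. The only cosmetic difference is that you say (Choice), (Var), and (Free) are ``excluded by the hypothesis,'' whereas they do occur in sub-derivations (e.g.\ a scrutinee reducing to a choice); like the paper, you in fact absorb them into the enclosing case rule, which is the right fix.
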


\begin{proof}
We prove this by constructing a transformation on derivations in our semantics
to a derivation in the natural semantics.
Because the natural semantics is not formulated for higher order expressions,
we will assume all expressions are first order and all applications
are fully applied.

We create a mapping $\Leftrightarrow$ which maps evaluation rules from
our semantics to the natural semantics.
The full mapping can be found in Figure \ref{fig:map}.
By our assumption, (Bot) or (Case-Bot) can never appear
in the evaluation.  If they did, then the $\bot$ would propagate
to the root of the expression.
The cases for (Lit), (Con), (Free), (Fun), and (Let) are straightforward.
We elide the stack in all of these mappings
because it is not relevant to the proof.

\begin{figure}
\centering
\captionsetup{justification=centering}
\begin{tabular}{ll}
(Lit) &
\atom{$G : l \Downarrow G : l$}
\DisplayProof \\
$\Leftrightarrow$     & \\
(Nat-Val) &
\atom{$\Gamma : l \Downarrow \Gamma : l$}
\DisplayProof \\ \hline
    &  \\
(Con) &
\atom{$G : C\ \vec{e} \Downarrow G : C(\vec{e})$}
\DisplayProof \\
$\Leftrightarrow$    &  \\
(Nat-Val) &
\atom{$\Gamma : C\ \vec e \Downarrow \Gamma : C\ \vec e$}
\DisplayProof \\ \hline
    &  \\
(Fun) &
\atom{$f\ \vec{x} = e$}
\atom{$G : e[\vec{x \mapsto y}] \Downarrow G_1 : v$}
\binary{$G : f\ \vec{y} \Downarrow G_1 : v$}
\DisplayProof \\
$\Leftrightarrow$    & \\
(Nat-Fun) &
\atom{$\Gamma : \rho(e) \Downarrow \Delta : v$}
\unary{$\Gamma : f\ \vec{y} \Downarrow \Delta : v$}
    \DisplayProof  where $f\ \vec x\in P$ and $\rho(y_n) = x_n$\\ \hline
    &  \\
(Let) &
\atom{$G[\vec{x\mapsto e}] : e_1 \Downarrow G_1 : v$}
\unary{$G :$ \clet{\vec{x = e}}{e_1}
       $\Downarrow G_1 : v$}
\DisplayProof \\
$\Leftrightarrow$    & \\
(Nat-Let) &
\atom{$\Gamma[\vec{y_k\mapsto \rho(e_k)}] : e \Downarrow \Delta : v$}
\unary{$\Gamma : $\clet{\vec{x_k = e_k}}{e}
       $\Downarrow \Delta : v$}
    \DisplayProof  where $\rho(x_k) = y_k$ \\ \hline
(Case-Fun-Lit) &
\atom{$G : f\ \vec{y} \Downarrow G_1 : l_i $}
\atom{$G_1 : e_i \Downarrow G_2 : v$}
\unary{$G_1[x\mapsto l_i] :$
       \ccase{v_x}{\vec{l \to e}}
      $\Downarrow G_2 : v $}
\binary{$G[x \mapsto f(\vec{y})] :$ \ccase{x}{\vec{l \to e}}
         $\Downarrow G_2 : v$}
\DisplayProof \\
$\Leftrightarrow$    & \\
(Nat-Select) &
\atom{$\Gamma : \rho(e) \Downarrow \Delta : l_i$}
\unary{$\Gamma : f\ \vec y \Downarrow \Delta : l_i$}
\atom{$\Delta : e_i \Downarrow \Phi : v$}
\binary{$\Gamma[x\mapsto f\ \vec y] :$ \ccase{x}{\vec{l_i \to e_i}} $\Downarrow \Phi : v$}
\DisplayProof where $f\ \vec x\in P$ and $\rho(y_n) = x_n$\\ \hline
 & \\
(Case-Choice) &
\atom{$G : f\ \vec{y} \Downarrow G_1 : ?(y,z) $}
\atom{$G_1: e \Downarrow G_2 : p_i $}
\atom{$G_2: e_i \Downarrow G_3 : v $}
\binary{$G_1[y\mapsto e]:$
       \ccase{y}{\vec{p \to e}}
       $\Downarrow G_3 : v $}
\unary{$G_1[x\mapsto \FWD(y)]:$
       \ccase{x}{\vec{p \to e}}
       $\Downarrow G_3 : v $}
\unary{$G_1[x \mapsto ?(y,z)] :$ \ccase{x}{\vec{p \to e}}
         $\Downarrow G_3 : v$}
\binary{$G_1[x \mapsto f(\vec{y})] :$ \ccase{x}{\vec{l \to e}}
         $\Downarrow G_3 : v$}
\DisplayProof \\
$\Leftrightarrow$ & \\
(Nat-Or) &
\atom{$\Gamma : y \Downarrow \Delta : p_i$}
\unary{$\Gamma : y\ or\ z \Downarrow \Delta : p_i$}
\noLine
\unary{$\vdots$}
\noLine
\unary{$\Gamma : \rho(e) \Downarrow \Delta : p_i$}
\unary{$\Gamma : f\ \vec y \Downarrow \Delta : p_i$}
\atom{$\Delta : e_i \Downarrow \Phi : v$}
\binary{$\Gamma[x\mapsto f\ \vec y] :$ \ccase{x}{\vec{l_i \to e_i}} $\Downarrow \Phi : v$}
\DisplayProof \\ \hline
& \\
(Case-LitFree) &
\atom{$G[x\mapsto l_1] : e_1 
       \Downarrow G_1 : v$}
\unary{$G[x\mapsto \tfree] :$ \ccase{x}{\vec{l \to e}}
         $\Downarrow G_1 : v$}
\DisplayProof \\
    $\Leftrightarrow$ & \\
(Nat-Guess) &
\atom{$\Gamma : e 
       \Downarrow \Delta : x$}
\atom{$\Delta[x\mapsto l_1] : e_1 
       \Downarrow \Theta : v$}
\binary{$\Gamma[x\mapsto e] :$ \ccase{x}{\vec{l \to e}}
         $\Downarrow \Theta : v$}
\DisplayProof \\
\end{tabular}
\caption{The mapping $\Leftrightarrow$}
\label{fig:map}
\end{figure}

The only two non-case rules that do not directly correspond are
(Choice) and (Var).
By our assumption that $v$ is a constructor or a literal,
we know that these rules must appear in the context of 
a case expression.
Because the scrutinee of all case statements is a variable,
all of our case rules will correspond to multiple rules
in the natural semantics.
Specifically, every scrutinee that is not in head normal form 
will have a (Case-Fun) rule to evaluate it.
We can assume that there may be (Case-Fwd) rules before any of the case
rules are applied.  This does not affect the result because the forwarding
nodes will disappear after contraction.
We show the case for a function application that evaluates to a literal,
but the case for a function evaluating to a constructor is identical.

In the case of (Choice), the correspondence is not immediately clear.
The evaluation seems very different because we treat choice as
a head normal form and the natural semantics does not.
However, because all choices must be evaluated in a case,
the next step in the evaluation is to select a branch for the choice.

Finally, we will consider the narrowing step.
This is similar to choice in that free variables are normal forms,
but it is an easier correspondence because free variables are normal
forms in the natural semantics as well.
In the natural semantics, if an expression $e$ evaluates to a variable $x$,
then it must be the case that $\Gamma[x \mapsto x]$ 
and $x$ is a free variable.
We show the case for case expressions with literal branches,
but the constructor
case is identical.
Because this covers every rule, $\Leftrightarrow$ is a correspondence
between our semantics and the natural semantics.
\end{proof}

These two theorems justify the correctness of our execution model.
If we have a terminating expression $e$, and $e$ evaluates to a value
$v$ in the natural semantics, then it will eventually evaluate to $v$
in our semantics.

\section{Related Work and Conclusion}
This work was built on the work of Hanus et al.~\cite{currySemantics},
and \brassel~\cite{Kics2Theory}.
Our execution model follows the execution model of \pakcs~\cite{pakcs},
with improvements for performance.
There are a number of different semantics for Curry including
CRWL~\cite{crwl}, and rewriting~\cite{TheoryToCurry}.
We elected to go with the natural semantics because it closely
resembles the implementation of the \rice{} compiler.
Other execution models have been described for \mcc~\cite{mcc}
and \kics~\cite{kics2}.
We cannot directly use the work on \kics\ because it uses pull-tabbing
rather than backtracking.
The execution model in \mcc\ was different enough that
we did not feel it was useful to build on it.

Another alternative would be the semantics 
given for the Verse Calculus~\cite{verse}.
While we think it would be an interesting idea to
compile Curry to VC and see how the performance compares,
we still have many questions about implementation details.

In future work, we would like to show the correctness of
some of the optimizations to the execution model found in the \rice{} 
compiler~\cite{rice}.
These include fast backtracking~\cite{ricePaper} and 
case shortcutting~\cite{rice}.
We would also like to show a correspondence with the denotational semantics
given by Mehner et al.~\cite{ParametricityCurry} to use 
the free theorems to justify
some tricky compiler transformations.

We have presented the execution model for \rice{}.
We extended the natural semantics by making 
it deterministic and adding a stack.
We justified our execution model by showing that 
expressions evaluate to the same values as the natural semantics.
We believe that this execution model is simple enough to be understandable,
but detailed enough to be useful.

\bibliographystyle{eptcs}
\bibliography{lsfa24}
\end{document}